\newtheorem{theorem}{Theorem}[section]
\newtheorem{lemma}[theorem]{Lemma}
\newtheorem{proposition}[theorem]{Proposition}
\newtheorem{corollary}[theorem]{Corollary}
\newtheorem{definition}[theorem]{Definition}
\def\<{{\langle}} \def\>{{\rangle}}       % Inner product
\title{A Note on Amortized Branching Program Complexity}
\author{
Aaron Potechin \\
Institute for Advanced Study
}
\date{\today}
\begin{document}
\maketitle

\begin{abstract}
In this paper, we show that while almost all functions require exponential size branching programs to compute, for all functions $f$ there is a branching program computing a doubly exponential number of copies of $f$ which has linear size per copy of $f$. This result disproves a conjecture about non-uniform catalytic computation, rules out a certain type of bottleneck argument for proving non-monotone space lower bounds, and can be thought of as a constructive analogue of Razborov's result that submodular complexity measures have maximum value $O(n)$.
\end{abstract}

\thispagestyle{empty}
\noindent \textbf{Acknowledgement: This material is based upon work supported by the National Science Foundation under agreement No. CCF-1412958 and by the Simons Foundation. The author would like to thank Avi Wigderson for helpful conversations.}\\
.\newpage

\section{Introduction}
In amortized analysis, which appears throughout complexity theory and algorithm design, rather than considering the worst case cost of an operation, we consider the average cost of the operation when it is repeated many times. This is very useful in the situation where operations may have a high cost but if so, this reduces the cost of future operations. In this case, the worst-case rarely occurs and the average cost of the operation is much lower. A natural question we can ask is as follows. Does amortization only help for specific operations, or can any operation/function be amortized? 

For boolean circuits, which are closely related to time complexity, Uhlig \cite{uhligone},\cite{uhligtwo} showed that for any function $f$, as long as $m$ is $2^{o(\frac{n}{\log{n}})}$ there is a circuit of size $O(\frac{2^n}{n})$ computing $f$ on $m$ different inputs simultaneously. As shown by Shannon \cite{shannon} and Lyupanov \cite{lyupanov}, almost all functions require circuits of size $\Theta(\frac{2^n}{n})$ to compute, which means that for almost all functions $f$, the cost to compute many inputs of $f$ is essentially the same as the cost to compute one input of $f$!

In this paper, we consider a similar question for branching programs, which are closely related to space complexity. In particular, what is the minimum size of a branching program which computes many copies of a function $f$ on the same input? This question is highly non-trivial because branching programs are not allowed to copy bits, so we cannot just compute $f$ once and then copy it. In this paper, we show that for $m = 2^{2^n-1}$, there is a branching program computing $m$ copies of $f$ which has size $O(mn)$ and thus has size $O(n)$ per copy of $f$.

This work has connections to several other results in complexity theory. In catalytic computation, introduced by Buhrman, Cleve, Kouck\'{y}, Loff, and Speelman \cite{catalyticspace}, we have an additional tape of memory which is initially full of unknown contents. We are allowed to use this tape, but we must restore it to its original state at the end of our computation. As observed by Girard, Kouck\'{y}, and McKenzie \cite{nonuniformcatalytic}, the model of a branching program computing multiple instances of a function is a non-uniform analogue of catalytic computation and our result disproves Conjecture 25 of their paper. Our result also rules out certain approaches for proving general space lower bounds. In particular, any lower bound technique which would prove a lower bound on amortized branching program complexity as well as branching program size cannot prove non-trivial lower bounds. Finally, our result is closely related to Razborov's result \cite{submodular} that submodular complexity measures have maximum size $O(n)$ and can be thought of as a constructive analogue of Razborov's argument.

\subsection{Outline}
In Section \ref{prelim} we give some preliminary definitions. In section \ref{construct} we give our branching program construction, proving our main result. In section \ref{catalytic} we briefly describe the relationship between our work and catalytic computation. In section \ref{lowerboundbarrier} we discuss which lower bound techniques for proving general space lower bounds are ruled out by our construction. Finally, in section \ref{relationtosubmodular} we describe how our work relates to Razborov's result \cite{submodular} on submodular complexity measures. 
\section{Preliminaries}\label{prelim}
In this section, we define branching programs, branching programs computing multiple copies of a function, and the amortized branching program complexity of a function.
\begin{definition}
We define a branching program to be a directed acyclic multi-graph $G$ with labeled edges and distinguished start nodes, accept nodes, and reject nodes which satisfies the following conditions.
\begin{enumerate}
\item Every vertex of $G$ has outdegree 0 or 2. For each vertex $v \in V(G)$ with outdegree 2, there exists an $i \in [1,n]$ such that one of the edges going out from $v$ has label $x_i = 0$ and the other edge going out from $v$ has label $x_i = 1$.
\item Every vertex with outdegree 0 is an accept node or a reject node.
\end{enumerate}
Given a start node $s$ of a branching program and an input $x \in \{0,1\}^n$, we start at $s$ and do the following at each vertex $v$ that we reach. If $v$ is an accept or reject node then we accept or reject, respectively. Otherwise, for some $i$, one of the labels going out from $v$ has label $x_i = 0$ and the other edge going out from $v$ has label $x_i = 1$. If $x_i = 0$ then we take the edge with label $x_i = 0$ and if $x_i = 1$ then we take the edge with label $x_i = 1$. In other words, we follow the path starting at $s$ whose edge labels match $x$ until we reach an accept or reject node and accept or reject accordingly.

Given a branching program $G$ and start node $s$, we define the function $f_s$ so that $f_s(x) = 1$ if we reach an accept node when we start at $s$ on input $x$ and $f_s(x) = 0$ if we reach a reject node when we start at $s$. We say that $(G,s)$ computes the function $f_s$.

We define the size of a branching program $G$ to be $|V(G)|$, the number of vertices/nodes of $G$.
\end{definition}
\begin{definition}
We say that a branching program computes $f$ $m$ times if the following is true:
\begin{enumerate}
\item The branching program has $m$ start nodes $s_1,\cdots,s_m$, $m$ accept nodes $a_1,\cdots,a_m$, and $m$ reject nodes $r_1,\cdots,r_m$
\item For all $i$, if the branching program starts at $s_i$ then on input $x$ it will end at $a_i$ if $f(x)=1$ and it will end at $r_i$ if $f(x) = 0$
\end{enumerate}
\end{definition}
\begin{definition}
Given a function $f$,
\begin{enumerate}
\item We define $b_m(f)$ to be the minimal size of a branching program which computes $f$ $m$ times.
\item We define the amortized branching program complexity $b_{avg}(f)$ of $f$ to be \\
$b_{avg}(f) = \lim_{m \to \infty}{\frac{b_m(f)}{m}}$
\end{enumerate}
\end{definition}
\begin{proposition}
For all functions $f$, $b_{avg}(f)$ is well-defined and is equal to $\inf{\{\frac{b_m(f)}{m}:m \geq 1\}}$ 
\end{proposition}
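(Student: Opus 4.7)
The plan is to recognize the proposition as a direct instance of Fekete's subadditive lemma: if $(a_m)_{m \geq 1}$ is a sequence of non-negative reals with $a_{m+k} \leq a_m + a_k$ for all $m, k \geq 1$, then $\lim_{m \to \infty} a_m/m$ exists and equals $\inf_{m \geq 1} a_m/m$. So the whole task reduces to establishing that $b_m(f)$ is subadditive in $m$.

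For subadditivity I would give a direct disjoint-union construction. Take an optimal branching program $G_1$ of size $b_m(f)$ computing $f$ exactly $m$ times, with start, accept, and reject nodes $s_i, a_i, r_i$ for $1 \leq i \leq m$, and an optimal branching program $G_2$ of size $b_k(f)$ computing $f$ exactly $k$ times, with start, accept, and reject nodes $s'_j, a'_j, r'_j$ for $1 \leq j \leq k$. The disjoint union $G_1 \sqcup G_2$ is a branching program of size $b_m(f) + b_k(f)$; since the two components share no vertices, the $m + k$ computations started from $s_1, \ldots, s_m, s'_1, \ldots, s'_k$ are independent of each other and each ends at the correct accept or reject node by the hypotheses on $G_1$ and $G_2$. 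Hence $b_{m+k}(f) \leq b_m(f) + b_k(f)$.

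Finally, I would carry out the standard Fekete argument. Let $L = \inf_{m \geq 1} b_m(f)/m$; this is a non-negative real number, bounded above by $b_1(f)$ thanks to iterated subadditivity. Fix $\epsilon > 0$ and choose $m_0$ with $b_{m_0}(f)/m_0 < L + \epsilon$. Writing an arbitrary $m \geq m_0$ as $m = q m_0 + r$ with $0 \leq r < m_0$, subadditivity yields $b_m(f) \leq q \cdot b_{m_0}(f) + r \cdot b_1(f)$ (where the $r = 0$ case is trivial). Dividing by $m$ and letting $m \to \infty$, the right-hand side tends to $b_{m_0}(f)/m_0 < L + \epsilon$, so $\limsup_{m \to \infty} b_m(f)/m \leq L + \epsilon$; combined with $b_m(f)/m \geq L$ for every $m$ and the arbitrariness of $\epsilon$, the limit exists and equals $L$. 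I do not anticipate any real obstacle here: the only complexity-theoretic input is the trivial disjoint-union subadditivity observation, and the rest is a textbook Fekete argument.
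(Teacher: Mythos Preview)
Your proposal is correct and takes essentially the same approach as the paper: establish subadditivity of $b_m(f)$ via the disjoint-union construction, then run the standard Fekete argument to conclude that the limit exists and equals the infimum. The only cosmetic difference is that the paper bounds the remainder by $b_r(f)$ rather than $r\cdot b_1(f)$, but this is immaterial.
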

\begin{proof}
Note that for all $m_1,m_2 \geq 1$, $b_{m_1 + m_2}(f) \leq b_{m_1}(f) + b_{m_2}(f)$ as if we are given a branching program computing $f$ $m_1$ times and a branching program computing $f$ $m_2$ times, we can take their disjoint union and this will be a branching program computing $f$ $m_1 + m_2$ times. Thus for all $m_0 \geq 1$, for all $k \geq 1,0 \leq r < m_0$, $b_{k{m_0}+r}(f) \leq kb_{m_0}(f)+b_r(f)$. This implies that $\lim_{m \to \infty}{\frac{b_{m}(f)}{m}} \leq \frac{b_{m_0}(f)}{m_0}$ and the result follows.
\end{proof}
\section{The Construction}\label{construct}
In this section, we give our construction of a branching program computing doubly exponentially many copies of a function $f$ which has linear size per copy of $f$, proving our main result.
\begin{theorem}\label{construction}
For all $f$, $b_{avg}(f) \leq 64n$. In particular, for all $f$, taking $m = 2^{2^n-1}$, $b_{m}(f) \leq 32n2^{2^n}$
\end{theorem}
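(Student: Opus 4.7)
First, I would label the $m = 2^{2^n-1}$ start nodes by Boolean functions $T : \{0,1\}^n \to \{0,1\}$ satisfying $T(y_0) = 0$ for a fixed input $y_0 \in \{0,1\}^n$, and I would label the $2m = 2^{2^n}$ end nodes by all Boolean functions on $n$ variables, with the convention that $a_T = T \oplus e_{y_0}$ (the function differing from $T$ only at $y_0$) and $r_T = T$. Under this labeling, computing $f$ correctly amounts to the following requirement: starting at the node labeled $T$ on input $x$, the program must finish at the node labeled $T \oplus f(x) \cdot e_{y_0}$. Thus the construction reduces to realizing the conditional bit-flip $T \mapsto T \oplus f(x) \cdot e_{y_0}$ by a branching program of total size at most $32n \cdot 2^{2^n}$.

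Next, I would build this branching program as a layered program with roughly $32n$ layers, each of width $2m = 2^{2^n}$, giving total size $32n \cdot 2^{2^n}$ as desired. Each layer will read one bit of $x$ and will act on its $2^{2^n}$ label-nodes by a permutation; this is consistent with the constraint that any two pointers from distinct start nodes $s_T$ and $s_{T'}$ must occupy distinct nodes at every layer on any fixed input $x$ (since their accept/reject pairs are disjoint, so if the two pointers ever coincide their futures would then collide at a single end node, which is impossible). The composition of the $32n$ per-layer permutations along the path dictated by $x$ must therefore equal the identity when $f(x) = 0$ and the involution ``flip bit $y_0$'' when $f(x) = 1$.

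The heart of the proof is exhibiting such a length-$O(n)$ permutation sequence for an arbitrary $f$. The crucial idea is that aside from the distinguished output bit $y_0$, the label $T$ contains $2^n - 1$ other bits, which serve as catalytic scratch memory: they can be used, and then restored, to reversibly compute $f(x)$ and XOR the result into the $y_0$-bit. Concretely, I would design each layer's two permutations (one per bit value of the queried variable) as ``controlled flips'' that XOR a function of $x$ and of selected bits of $T$ into designated positions, choosing which positions are flipped and in which order so that after $O(n)$ layers the $y_0$-bit has been flipped exactly $f(x) \pmod 2$ times while every other bit has been flipped an even number of times and so is restored to its original value. The main obstacle is verifying that such a short sequence of controlled flips exists for every $f$ and can be described explicitly; once this is done the size bound is immediate from layers times width, and the amortized bound $b_{avg}(f) \leq 64n$ follows from it and the preceding proposition.
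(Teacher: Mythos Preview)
Your reformulation is correct and insightful: labeling the width-$2^{2^n}$ layers by Boolean functions $T$ and viewing the task as realizing the reversible update $T \mapsto T \oplus f(x)\,e_{y_0}$ is exactly the catalytic-computation perspective the paper alludes to in Section~\ref{catalytic}. But your proposal stops precisely where the real work begins. You explicitly write that ``the main obstacle is verifying that such a short sequence of controlled flips exists for every $f$ and can be described explicitly,'' and then you do not supply one. That sequence \emph{is} the theorem; the layered-permutation framework around it is routine.

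The paper's proof provides the missing construction, and it is not a generic controlled-flip scheme. It rests on two ideas your outline does not contain. First, an explicit $2n$-layer permutation program built from two phases: an $n$-layer ``identifying'' phase in which a node at level $j$ carries a function $g:\{0,1\}^j\to\{0,1\}$ (with multiplicity $2^{2^n-2^j}$), and reading $x_j$ extends $g$ by one argument, so that on input $x$ the start nodes are carried bijectively onto $\{t_g : g(x)=1\}$; followed, after the single $f$-dependent relabeling $t_g \mapsto s_{(f\wedge g)\vee(\neg f\wedge\neg g)}$, by an $n$-layer ``evaluating'' phase that restricts functions one variable at a time and carries each $s_h$ to an accept label iff $h(x)=1$. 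Second, this $2n$-layer program only guarantees that each $s_i$ reaches \emph{some} accept or reject node, not $a_i$ or $r_i$ specifically; the paper fixes this by appending two disjoint copies of the \emph{reversed} program, one rooted at the accept labels and one at the reject labels, which undoes the unknown permutation while keeping the accept/reject outcome separated. Neither the two-phase identify/evaluate structure nor the reversal-to-unscramble trick appears in your outline.

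A smaller issue: a single layer reads only one input bit, so at that layer the permutation can depend on $x$ only through that one bit (and through whatever has already been absorbed into $T$). Your description of ``controlled flips that XOR a function of $x$ and of selected bits of $T$'' does not make clear how the full dependence on $x$ is accumulated in $O(n)$ steps, and it is not evident that XOR-type gates suffice; in any case, no concrete scheme is given.
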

\begin{proof}
Our branching program has several parts. We first describe each of these parts and how we put them together and then we will describe how to construct each part. The first two parts are as follows:
\begin{enumerate}
\item A branching program which simultaneously identifies all functions $g:\{0,1\}^n \to \{0,1\}$ that have value $1$ for a given $x$. More preceisely, it has start nodes $s_1,\cdots,s_m$ where $m = 2^{2^n-1}$ and has one end node $t_g$ for each possible function $g:\{0,1\}^n \to \{0,1\}$, with the guarantee that if $g(x) = 1$ for a given $g$ and $x$ then there exists an $i$ such that that the branching program goes from $s_i$ to $t_g$ on input $x$.
\item A branching program which simultaneously evaluates all functions $g:\{0,1\}^n \to \{0,1\}$. More precisely, it has one start node $s_g$ for each function $g$ and has end nodes $a'_1,\cdots,a'_m$ and $r'_1,\cdots,r'_m$, with the guarantee that for a given $g$ and $x$, if $g(x) = 1$ then the branching program goes from $s_g$ to $a'_i$ for some $i$ and if $g(x) = 0$ then the branching program goes from $s_g$ to $r'_i$ for some $i$. 
\end{enumerate}
If $f$ is the function which we actually want to compute, we combine these two parts as follows. The first part gives us paths from $\{s_i:i \in [1,m]\}$ to $\{t_g:g(x) = 1\}$. We now take each $t_g$ from the first part and set it equal to $s_{(f \wedge g) \vee (\neg{f} \wedge \neg{g})}$ in the second part. Once we do this, if $f(x) = 1$ then for all $g$, $g(x) = 1 \iff (f \wedge g) \vee (\neg{f} \wedge \neg{g}) = 1$ so we will have paths from $\{t_g:g(x) = 1\} = \{s_{(f \wedge g) \vee (\neg{f} \wedge \neg{g})}:g(x)=1\}$ to $\{a_i:i \in [1,m]\}$. If $f(x) = 0$ then for all $g$, $g(x) = 1 \iff (f \wedge g) \vee (\neg{f} \wedge \neg{g}) = 0$, so we will have paths from $\{t_g:g(x) = 1\} = \{s_{(f \wedge g) \vee (\neg{f} \wedge \neg{g})}:g(x)=1\}$ to $\{r_i:i \in [1,m]\}$. Putting everything together, when $f(x) = 1$ we will have paths from $\{s_i:i \in [1,m]\}$ to $\{a_i:i \in [1,m]\}$ and when $f(x) = 0$ we will have paths from $\{s_i:i \in [1,m]\}$ to $\{r_i:i \in [1,m]\}$.

However, these paths do not have to map $s_i$ to $a'_i$ or $r'_i$, they can permute the final destinations. To fix this, our final part will run the branching program we have so far in reverse. This fixes the permutation issue but gets us right back where we started! To avoid this, we have two copies of this final part, one applied to $\{a'_i:i \in [1,m]\}$ and one applied to $\{r'_i:i \in [1,m]\}$. This separates the case when $f(x) = 1$ and the case $f(x) = 0$, giving us our final branching program.

We now describe how to construct each part. For the first part, which simultaneously identifies the functions which have value $1$ on input $x$, we have a layered branching program with $n+1$ levels going from $0$ to $n$. At level $j$, for each function $g:\{0,1\}^{j} \to \{0,1\}$, we have $2^{2^n-2^j}$ nodes corresponding to $g$. For all $j \in [1,n]$ we draw the arrows from level $j-1$ to level $j$ as follows. For a node corresponding to a function $g:\{0,1\}^{j-1} \to \{0,1\}$, we draw an arrow with label $x_j=1$ from it to a node corresponding to a function $g':\{0,1\}^{j} \to \{0,1\}$ such that $g'(x_1,\cdots,x_{j-1},1)=g(x_1,\cdots,x_{j-1})$. Similarly, we draw an arrow with label $x_j=0$ from it to a node corresponding to a function $g':\{0,1\}^{j} \to \{0,1\}$ such that $g'(x_1,\cdots,x_{j-1},0)=g(x_1,\cdots,x_{j-1})$. We make these choices arbitrarily but make sure that no two arrows with the same label have the same destination.
\begin{figure}[ht]
\centerline{\includegraphics[height=8cm]{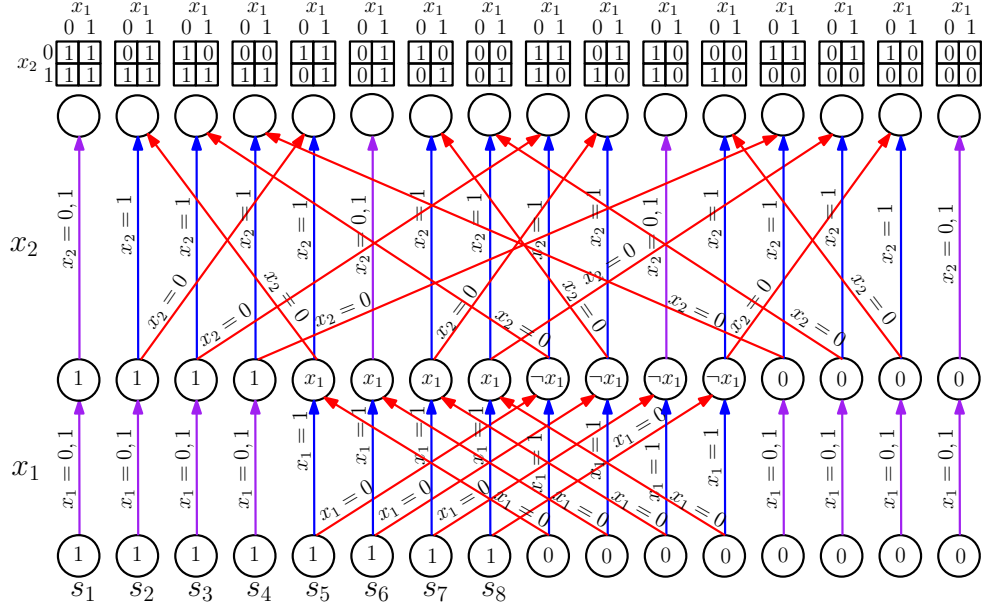}}
\caption{This figure illustrates part 1 of our construction for $n=2$. The functions for the top vertices are given by the truth tables at the top and each other vertex corresponds to the function inside it. Blue edges can be taken when the corresponding variable has value $1$, red edges can be taken when the corresponding variable has value $0$, and purple edges represent both a red edge and a blue edge (which are drawn as one edge to make the diagram cleaner). Note that for all inputs $x$, there are paths from the start nodes to the functions which have value $1$ on input $x$ at each level.}
\label{identifying}
\end{figure}

For the second part, which simultaneously evaluates each function, we have a layered branching program  with $n+1$ levels going from $0$ to $n$. At level $n-j$, for each function $g:\{0,1\}^{j} \to \{0,1\}$, we have $2^{2^n-2^{j}}$ nodes corresponding to $g$. For all $j \in [1,n]$ we draw the arrows from level $n-j$ to level $n-j+1$ as follows. For a node corresponding to a function $g:\{0,1\}^{j} \to \{0,1\}$, we draw an arrow with label $x_j=1$ from it to a node corresponding to the function $g(x_1,\cdots,x_{j-1},1)$ and draw an arrow with label $x_j=0$ from it to a node corresponding to the function $g(x_1,\cdots,x_{j-1},0)$. Again, we make these choices arbitrarily but make sure that no two edges with the same label have the same destination.
\begin{figure}[ht]
\centerline{\includegraphics[height=8cm]{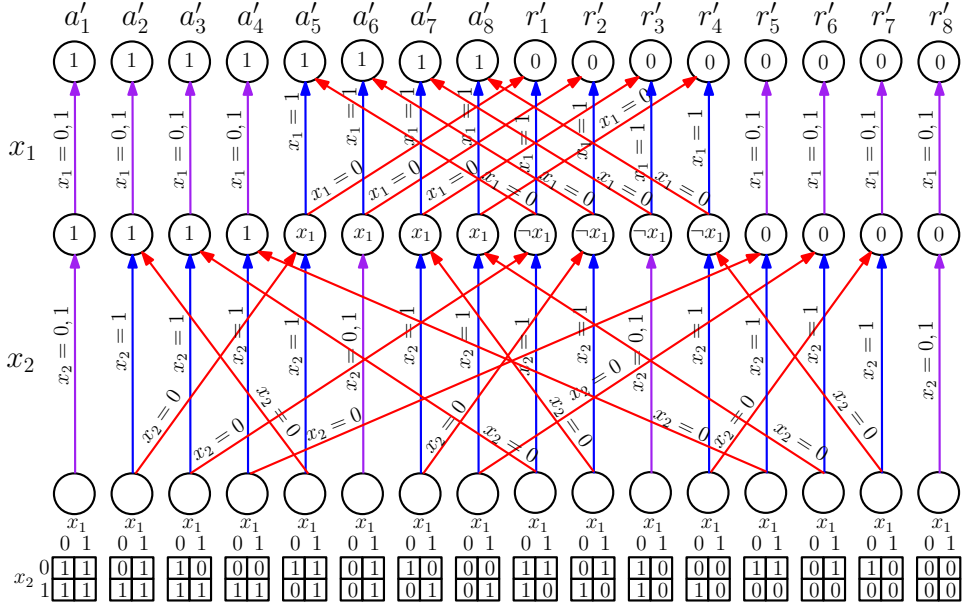}}
\caption{This figure illustrates part 2 of our construction for $n=2$. The functions for the bottom vertices are given by the truth tables at the bottom and each other vertex corresponds to the function inside it. Note that for all inputs $x$, the paths go between the functions which evaluate to $1$ on input $x$ and the accept nodes and between the functions which evaluate to $0$ on input $x$ and the reject nodes.}
\label{identifying}
\end{figure}

For the final part, note that because we made sure not to have any two edges with the same label have the same destination and each level has the same number of nodes, our construction so far must have the following properties
\begin{enumerate}
\item Every vertex has indegree 0 or 2. For the vertices $v$ with indegree 2, there is a $j$ such that one edge going into $v$ has label $x_j = 1$ and the other edge going into $v$ has label $x_j = 0$.
\item The vertices which have indegree 0 are precisely the vertices in the bottom level.
\end{enumerate}
These conditions imply that if we reverse the direction of each edge in the branching program we have so far, this gives us a branching program which runs our branching program in reverse. As described before, we now take two copies of this reverse program. For one copy, we take its start nodes to be $a'_1,\cdots,a'_m$ and relabel its copies of $s_1,\cdots,s_m$ as $a_1,\cdots,a_m$. For the other copy, we take its start nodes to be $r'_1,\cdots,r'_m$ and relabel its copies of $s_1,\cdots,s_m$ as $r_1,\cdots,r_m$.
\end{proof}
\section{Relationship to catalytic computation}\label{catalytic}
In catalytic computation, we have additional memory which we may use but this memory starts with unknown contents and we must restore this memory to its original state at the end. Our result is related to catalytic computation through Proposition 9 of \cite{nonuniformcatalytic}, which says the following
\begin{proposition}
Let $f$ be a function which can be computed in space $s(n)$ using catalytic tape of size $l(n) \leq 2^{s(n)}$. Then $b_{2^l}(f)$ is ${2^l} \cdot 2^{O(s(n))}$.
\end{proposition}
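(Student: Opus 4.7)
The plan is a direct configuration-by-configuration simulation: given a catalytic Turing machine $M$ computing $f$ with work space $s = s(n)$ and catalytic tape length $l = l(n)$, build a branching program whose nodes are the reachable configurations of $M$ on length-$n$ inputs, and use the $2^l$ possible initial catalytic strings as the ``copies.'' I would first modify $M$ (at no asymptotic cost) so that it halts in a unique canonical work-tape/head-position/internal-state configuration for each output bit, so that upon termination the full machine configuration is pinned down by the restored catalytic contents together with the answer.

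For each $c \in \{0,1\}^l$, designate $s_c$ to be the initial configuration of $M$ with catalytic tape equal to $c$, and $a_c, r_c$ to be the canonical halting-accept and halting-reject configurations with catalytic tape $c$. Because these vertices differ in their catalytic component, the $3 \cdot 2^l$ distinguished nodes are all distinct. To match the branching-program definition in Section~\ref{prelim}, contract every configuration whose next transition does not query the input tape (each such configuration has a unique deterministic successor); each surviving non-terminal vertex then reads some $x_i$ and has exactly two outgoing edges labeled $x_i = 0$ and $x_i = 1$. Correctness of ``$f$ computed $2^l$ times'' is then immediate from the catalytic guarantee: starting at $s_c$ on input $x$, $M$ computes $f(x)$ and restores the catalytic tape to $c$, so the path ends at $a_c$ when $f(x) = 1$ and at $r_c$ when $f(x) = 0$.

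The size bound is just a configuration count. A configuration specifies the internal state, the three head positions, and the work and catalytic tape contents, so the total number is at most
\[
|Q| \cdot n \cdot s \cdot l \cdot 2^s \cdot 2^l \;=\; 2^l \cdot 2^{s + O(\log n + \log s + \log l)}.
\]
Under the hypothesis $l(n) \leq 2^{s(n)}$ together with the usual $s \geq \log n$, every logarithmic term is $O(s)$, so the branching program has size $2^l \cdot 2^{O(s(n))}$, as required. There is no real conceptual obstacle here; the only slightly delicate step is the syntactic configuration-contraction so the output fits the paper's branching-program definition, and the only place the catalytic-tape-size hypothesis $l \leq 2^s$ is actually used is in absorbing the polynomial factors into $2^{O(s)}$.
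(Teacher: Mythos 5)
Your proof is correct and follows essentially the same approach as the paper's proof sketch: simulate the catalytic Turing machine's configuration graph as a branching program, obtain the $2^l$ copies from the $2^l$ possible initial catalytic-tape contents, and rely on $l \leq 2^s$ to charge the catalytic head-position factor to $2^{O(s)}$. You spell out a few details the paper glosses over (canonicalizing the halting configuration, contracting non-input-reading steps to fit the branching-program syntax), but the underlying argument is identical.
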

For convenience, we give a proof sketch of this result here.
\begin{proof}[Proof sketch]
This can be proved using the same reduction that is used to reduce a Turing machine using space $s(n)$ to a branching program of size $2^{O(s(n))}$, with the following differences. There are $2^l$ possibilities for what is in the catalytic tape at any given time, so the resulting branching program is a factor of $2^l$ times larger. The requirement that the catalytic tape is restored to its original state at the end implies that there must be $2^l$ disjoint copies of the start, accept, and reject nodes, one for each possibility for what is in the catalytic tape originally. This means that the branching program computes $f$ $2^l$ times. Finally, the condition that $l(n) \leq 2^{s(n)}$ is necessary because otherwise the branching program would have to be larger in order to keep track of where the pointer to the catalytic tape is pointing!
\end{proof}
Girard, Kouck\'{y}, and McKenzie \cite{nonuniformcatalytic} conjectured that for a random function $f$, for all $m \geq 1$, $b_m(f)$ is $\Omega(mb_1(f))$. If true, this conjecture would imply (aside from issues of non-uniformity) that a catalytic tape does not significantly reduce the space required for computing most functions. However, our construction disproves this conjecture. 

That said, our construction requires $m$ to be doubly exponential in $n$. It is quite possible that $\log(\frac{b_m(f)}{m})$ is $\Omega(\log(b_1(f)))$ for much smaller $m$, which would still imply (aside from issues of non-uniformity) that a catalytic tape does not reduce the space required for computing most functions by more than a constant factor.
\section{Barrier for input-based bottleneck arguments}\label{lowerboundbarrier}
As noted in the introduction, our result rules out any general lower bound approach which would prove lower bounds on amortized branching program complexity as well as branching program size. In this section, we discuss one such class of techniques. 

One way we could try to show lower bounds on branching programs is as follows. We could argue that for the given function $f$ and a given branching program $G$ computing $f$, for every YES input $x$ the path that $G$ takes on input $x$ contains a vertex giving a lot of information about $x$ and thus $G$ must be large to accomodate all of the possible inputs. We observe that this kind of argument would show lower bounds on amortized branching prorgam complexity as well as on branching program size and thus cannot show nontrivial general lower bounds. In particular, we show the following.
\begin{lemma}\label{flowlemma}
Assume that we have a function $f$ and a set of YES inputs $I$ of $f$. If there is a criterion for assigning vertices of a branching program to inputs in $I$ such that
\begin{enumerate}
\item For all $x \in I$ and any path in a branching program from a start node to an accept node on input $x$, there is a vertex $v_x$ in this path which can be assigned to $x$.
\item For any vertex $v$ in any branching program computing $f$, $v$ can be assigned to at most $\frac{1}{S}$ of the inputs in $I$
\end{enumerate}
then $b_m(f) \geq mS$
\end{lemma}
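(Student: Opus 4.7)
The plan is to fix an arbitrary branching program $G$ computing $f$ $m$ times with start/accept nodes $s_1,\dots,s_m$ and $a_1,\dots,a_m$, and show that $|V(G)|\geq mS$ by a double-counting argument over the vertex-input incidence set $\{(v,x): v \in V(G),\ x \in I,\ v \text{ is assigned to } x\}$. The key structural observation that drives the argument is that on any fixed input $x$, the $m$ computation paths $P_{i,x}$ (the path from $s_i$ to $a_i$ traced on input $x$) are pairwise vertex-disjoint. This is because the successor of any vertex on a given input is deterministic, so if two paths $P_{i,x}$ and $P_{j,x}$ shared a vertex they would agree from that vertex onward and terminate at a common accept node, contradicting $a_i \neq a_j$.

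From this disjointness, the argument proceeds as follows. Fix $x \in I$. By hypothesis (1), each of the $m$ paths $P_{1,x},\dots,P_{m,x}$ contains a vertex that can be assigned to $x$. Since these paths are vertex-disjoint, they contribute $m$ distinct vertices of $G$ that can be assigned to $x$. Hence, writing $A(v) \subseteq I$ for the set of inputs assigned to vertex $v$ and $N(x) = \{v \in V(G) : x \in A(v)\}$, we have $|N(x)| \geq m$ for every $x \in I$.

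Now double-count the incidences $\sum_{v \in V(G)} |A(v)| = \sum_{x \in I} |N(x)|$. The lower bound just established gives $\sum_{x \in I} |N(x)| \geq m|I|$, while hypothesis (2) gives $|A(v)| \leq |I|/S$ for every $v$, so $\sum_{v \in V(G)} |A(v)| \leq |V(G)|\cdot |I|/S$. Combining, $|V(G)|\cdot |I|/S \geq m|I|$, hence $|V(G)| \geq mS$. Since $G$ was an arbitrary branching program computing $f$ $m$ times, this yields $b_m(f) \geq mS$.

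The only non-routine step is the disjointness of the $m$ paths on a common input, and the main thing to write carefully is why this justifies counting $m$ distinct assigned vertices per $x$ rather than just one; the rest is elementary double counting. No further tools from earlier in the paper are needed beyond the definition of ``computes $f$ $m$ times.''
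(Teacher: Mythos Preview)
Your proposal is correct and follows essentially the same double-counting argument as the paper. The one difference is that you make explicit the vertex-disjointness of the $m$ computation paths on a fixed input (and use it to justify that each $x\in I$ contributes at least $m$ distinct assigned vertices), whereas the paper's proof asserts the lower bound $m|I|$ on incidences without spelling this out; the paper does, however, note the role of disjointness in the paragraph immediately following the lemma.
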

\begin{proof}
Let $G$ be a branching program computing $f$ $m$ times. The total number of times some vertex in $G$ is assigned to an input in $I$ is as least $m|I|$. However, each vertex of the branching program can be assigned to at most $\frac{|I|}{S}$ inputs in $I$, so the total number of times a vertex is assigned to an input in $I$ is at most $\frac{|V(G)|\cdot|I|}{S}$, where $|V(G)|$ is the size of the branching program. Thus, $\frac{|V(G)|\cdot|I|}{S} \geq m|I|$ which implies that $|V(G)| \geq mS$, as needed.
\end{proof}
If we could give such a criterion for some function $f$ with a large $S$, this would imply that $b_1(f) \geq S$, giving us a large lower bound on the size of any branching program computing $f$. However, Theorem \ref{construction} says that for $m = 2^{2^n-1}$, $b_m(f) \leq 64mn$. Thus, $mS \leq 64mn$, so $S \leq 64n$ and so no such argument can prove a superlinear lower bound. 

In fact, the reasoning behind Lemma \ref{flowlemma} proves a lower bound not only on $b_m(f)$, but on the size of any branching program $G$ with $m$ start nodes $s_1,\cdots,s_m$ where for every start node $s_i$, $(G,s_i)$ computes the function $f$ and the computation paths are disjoint for any given input $x$. In other words, it does not matter if there is a permutation of which accept or reject nodes are reached. For the purposes of Lemma \ref{flowlemma}, we only need the first two parts of the construction in Theorem \ref{construction} so we have the same upper bound on $S$ even for oblivious read-twice branching programs!

Thus, lemma \ref{flowlemma} provides an explanation for the spike in difficulty between proving lower bounds for read-once branching programs and read-twice branching programs which can be seen in Razborov's survey \cite{razborovsurvey}. Lemma \ref{flowlemma} also implies that the current framework of Potechin and Chan \cite{switchingnetwork} for analyzing monotone switching networks cannot prove general lower bounds without being modified significantly, as it currently uses an input-based bottleneck argument. However, this construction does not say anything about lower bounds based on counting functions such as Neciporuk's quadratic lower bound \cite{neciporuk} or lower bounds based on communication complexity arguments.

\section{Linear upper bound on complexity measures}\label{relationtosubmodular}
Another way we could try to lower bound branching program size is through a complexity measure on functions. However, Razborov \cite{submodular} showed that submodular complexity measures cannot have superlinear values. In this section we show that this is also true for a similar class of complexity measures, branching complexity measures, which correspond more closely to branching programs. We then show that all submodular complexity measures are also branching complexity measures, so Theorem \ref{construction} is a constructive analogue and a slight generalization of Razborov's result \cite{submodular}.
\begin{definition}
We define a branching complexity measure $\mu_b$ to be a measure on functions which satisfies the following properties
\begin{enumerate}
\item $\forall i, \mu_b(x_i) = \mu_b(\neg{x_i}) = 1$
\item $\forall f, \mu_b(f) \geq 0$
\item $\forall f,i, \mu_b(f \wedge x_i) + \mu_b(f \wedge \neg{x_i}) \leq \mu_b(f) + 2$
\item $\forall f,g, \mu_b(f \vee g) \leq \mu_b(f) + \mu_b(g)$
\end{enumerate}
\end{definition}
\begin{definition}
Given a node $v$ in a branching program, define $f_v(x)$ to be the function such that $f_v(x)$ is $1$ if there is a path from some start node to $v$ on input $x$ and $0$ otherwise. Note that for any start node $s$, $f_s = 1$.
\end{definition}
\begin{lemma}\label{branchingmeasure}
If $\mu_b$ is a branching complexity measure then for any branching program, the number of non-end nodes which it contains is at least $\frac{1}{2}\left(\sum_{t:t \text{ is an end node}}{\mu_b(f_t)} - \sum_{s:s \text{ is a start node}}{\mu_b(f_s)}\right)$.
\end{lemma}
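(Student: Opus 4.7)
The plan is a double-counting argument over edges that combines properties 3 and 4. For each directed edge $e$ of the branching program, going from a node $u$ (which branches on $x_i$) to its child with edge label $x_i = c$, define $h_e := f_u \wedge \ell_e$, where $\ell_e = x_i$ if $c = 1$ and $\ell_e = \neg x_i$ if $c = 0$. Intuitively, $h_e(x) = 1$ exactly when edge $e$ is traversed on input $x$. The key observation is that both property 3 (fan-out) and property 4 (fan-in) produce bounds involving the common sum $\sum_e \mu_b(h_e)$.

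First, I apply property 3 at each non-end node $u$. Such a $u$ has exactly two outgoing edges $e_0, e_1$ with $h_{e_0} = f_u \wedge \neg x_i$ and $h_{e_1} = f_u \wedge x_i$, so property 3 gives $\mu_b(h_{e_0}) + \mu_b(h_{e_1}) \leq \mu_b(f_u) + 2$. Summing over all non-end nodes $u$, and noting that each edge of the branching program is the outgoing edge of exactly one such node, yields
$$\sum_e \mu_b(h_e) \;\leq\; \sum_{u \text{ non-end}} \mu_b(f_u) + 2N,$$
where $N$ is the number of non-end nodes.

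Second, for any non-start node $w$, the reachability function decomposes as $f_w = \bigvee_{e \text{ into } w} h_e$, since $w$ is reached on input $x$ iff some incoming edge is traversed. Iterating property 4 gives $\mu_b(f_w) \leq \sum_{e \text{ into } w} \mu_b(h_e)$. Summing over non-start nodes $w$, and using that the incoming edge sets of distinct nodes are disjoint, yields $\sum_{w \text{ non-start}} \mu_b(f_w) \leq \sum_e \mu_b(h_e)$. Chaining the two inequalities and cancelling the common sum $\sum_{w \text{ internal}} \mu_b(f_w)$ (over nodes that are neither start nor end, which appears on both sides) collapses to $\sum_{t \text{ end}} \mu_b(f_t) \leq \sum_{s \text{ start}} \mu_b(f_s) + 2N$, which rearranges to the claimed inequality.

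I do not foresee a serious obstacle; the crux is the choice of $h_e$, which translates the local fan-out and fan-in constraints into a single global edge sum that can be squeezed from both sides. The only minor technicality is that an unreachable internal node would have $f_w \equiv 0$, for which the argument would require $\mu_b(0) \leq 0$; this is easily dispatched by first pruning all unreachable internal nodes, an operation that leaves the right-hand side of the lemma unchanged while only decreasing $N$, so the bound on the pruned program implies the bound on the original.
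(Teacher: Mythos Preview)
Your proof is correct. It is at heart the same telescoping argument as the paper's, but presented statically rather than dynamically. The paper imagines building the branching program in topological order from the start nodes, alternately \emph{branching} a current leaf (which by property~3 raises $\sum_t \mu_b(f_t)-\sum_s \mu_b(f_s)$ by at most $2$ while creating one new non-end node) and \emph{merging} leaves (which by property~4 can only lower it). You instead introduce the edge functions $h_e$ and observe that property~3 at each non-end node and property~4 at each non-start node both bound the single global quantity $\sum_e \mu_b(h_e)$; chaining and cancelling the internal-node contributions yields the lemma directly. The two arguments invoke the same local inequalities at the same nodes; your version is a bit more explicit and sidesteps having to describe a construction order, while the paper's is more narrative.

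One small remark on your technical caveat: the problematic case is not only an unreachable \emph{internal} node but also an unreachable \emph{end} node, which after pruning would have empty in-edge set yet cannot itself be pruned without changing the right-hand side. This is not a flaw relative to the paper, however: the paper's own ``build it from the start nodes'' argument carries exactly the same implicit assumption that every node is reachable from some start node, and that assumption holds in all the applications (in particular in the programs produced by Theorem~\ref{construction}).
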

\begin{proof}
To see this, consider what happens to $\sum_{t:t \text{ is an end node}}{\mu_b(f_t)} - \sum_{s:s \text{ is a start node}}{\mu_b(f_s)}$ as we construct the branching program. At the start, when we only have the start nodes and these are also our end nodes, this expression has value $0$. Each time we merge end nodes together, this can only decrease this expression. Each time we branch off from an end node, making the current node a non-end node and creating two new end nodes, this expression increases by at most $2$. Thus, the final value of this expression is at most twice the number of non-end nodes in the final branching program, as needed. 
\end{proof}
\begin{corollary} \label{branchingcorollary}
For any branching complexity measure $\mu_b$ and any function $f$, $\mu_b(f) \leq 130n$
\end{corollary}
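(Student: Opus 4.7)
The plan is to combine Theorem \ref{construction} with Lemma \ref{branchingmeasure} applied to the specific branching program that Theorem \ref{construction} produces. Let $m = 2^{2^n - 1}$, and let $G$ be the branching program from Theorem \ref{construction} that computes $f$ $m$ times, so that $|V(G)| \leq 32n \cdot 2^{2^n} = 64mn$.

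First I would identify what $f_v$ evaluates to at the relevant distinguished nodes of $G$. At each start node $s_i$ we have $f_{s_i} = 1$, by the remark in the definition of $f_v$. At each accept node $a_i$ we have $f_{a_i} = f$, since a path from some start node reaches $a_i$ precisely when $f(x) = 1$. Similarly $f_{r_i} = \neg f$ at each reject node. Plugging these into Lemma \ref{branchingmeasure} and using that the number of non-end nodes is at most $|V(G)|$, I get
\[
64mn \;\geq\; \tfrac{1}{2}\bigl(m\,\mu_b(f) + m\,\mu_b(\neg f) - m\,\mu_b(1)\bigr),
\]
which simplifies to $\mu_b(f) + \mu_b(\neg f) \leq 128n + \mu_b(1)$.

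Next I would bound $\mu_b(1)$. Writing $1 = x_1 \vee \neg x_1$, properties 1 and 4 of a branching complexity measure give $\mu_b(1) \leq \mu_b(x_1) + \mu_b(\neg x_1) = 2$. Combined with property 2 ($\mu_b(\neg f) \geq 0$), the previous inequality yields $\mu_b(f) \leq 128n + 2$, which is at most $130n$ for all $n \geq 1$. (The small-$n$ edge case, if any, can be absorbed into the constant or handled separately by noting that $\mu_b$ of any function on $n=0$ variables is bounded by a constant from properties 2--4.)

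The argument is essentially plug-and-chug once the two ingredients are in place, so there is no serious obstacle. The one spot that requires a moment's thought is confirming that the branching program of Theorem \ref{construction} really does have $f_{a_i} = f$ and $f_{r_i} = \neg f$ at every accept and reject node rather than only at some permutation of them; this is exactly what the reverse-and-double final part of the construction was designed to ensure, so it is already taken care of. The only slack in the constants comes from the $+ \mu_b(1)$ term, and absorbing $\mu_b(1) \leq 2$ into the linear bound is what upgrades $128n$ to $130n$.
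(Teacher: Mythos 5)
Your proof is correct and takes essentially the same route as the paper: apply Lemma \ref{branchingmeasure} to a branching program computing $f$ $m$ times, identify $f_{s_i}=1$, $f_{a_i}=f$, $f_{r_i}=\neg f$, bound $\mu_b(1)\leq 2$, and combine with the size bound from Theorem \ref{construction}. The only cosmetic difference is that you plug in the specific $m=2^{2^n-1}$ directly while the paper phrases the inequality for all $m$ before invoking the theorem, and you carry the $\mu_b(\neg f)\geq 0$ term explicitly where the paper drops it silently; the content is the same.
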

\begin{proof}
By Lemma \ref{branchingmeasure} we have that for all $m \geq 1$, $\frac{m\mu_b(f) - m\mu_b(1)}{2} \leq m \cdot b_m(f)$. Using Theorem \ref{construction} and noting that $\mu_b(1) \leq 2$ we obtain that $\mu_b(f) \leq 130n$.
\end{proof}
Finally, we note that every submodular complexity measure $\mu_s$ is a branching complexity measure, so Corollary \ref{branchingcorollary} is a slight generalization of Razborov's result \cite{submodular} (though with a worse constant).
\begin{definition}
A submodular complexity measure $\mu_s$ is a measure on functions which satisfies the following properties
\begin{enumerate}
\item $\forall i, \mu(x_i) = \mu(\neg{x_i}) = 1$
\item $\forall f, \mu(f) \geq 0$
\item $\forall f,g, \mu_s(f \vee g) + \mu_s(f \wedge g) \leq \mu_s(f) + \mu_s(g)$
\end{enumerate}
\end{definition}
\begin{lemma}
Every submodular complexity measure $\mu_s$ is a branching complexity measure.
\end{lemma}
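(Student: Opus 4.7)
The plan is to verify the four defining properties of a branching complexity measure for an arbitrary submodular complexity measure $\mu_s$. Properties 1 and 2 are word-for-word identical to the first two properties of a submodular complexity measure, so they hold by hypothesis. What remains is to derive the two properties that are specific to the branching notion: the conditional branching inequality (property 3) and sub-additivity under disjunction (property 4).

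Sub-additivity under disjunction is essentially free. Submodularity gives $\mu_s(f \vee g) + \mu_s(f \wedge g) \leq \mu_s(f) + \mu_s(g)$, and since $\mu_s(f \wedge g) \geq 0$ by the non-negativity axiom, dropping it yields $\mu_s(f \vee g) \leq \mu_s(f) + \mu_s(g)$.

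For the branching inequality I would invoke submodularity three times. First, apply it to the pair $(f, x_i)$ and then to the pair $(f, \neg x_i)$, using $\mu_s(x_i) = \mu_s(\neg x_i) = 1$, to get
\[
\mu_s(f \vee x_i) + \mu_s(f \wedge x_i) \leq \mu_s(f) + 1,\qquad \mu_s(f \vee \neg x_i) + \mu_s(f \wedge \neg x_i) \leq \mu_s(f) + 1.
\]
Summing these gives an inequality whose left side contains both the target terms $\mu_s(f \wedge x_i) + \mu_s(f \wedge \neg x_i)$ and a parasitic contribution $\mu_s(f \vee x_i) + \mu_s(f \vee \neg x_i)$. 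To remove that parasite, apply submodularity a third time to $(f \vee x_i, f \vee \neg x_i)$, whose disjunction simplifies to the constant $1$ and whose conjunction simplifies back to $f$; this yields $\mu_s(1) + \mu_s(f) \leq \mu_s(f \vee x_i) + \mu_s(f \vee \neg x_i)$. Substituting into the summed inequality and discarding $\mu_s(1) \geq 0$ leaves exactly $\mu_s(f \wedge x_i) + \mu_s(f \wedge \neg x_i) \leq \mu_s(f) + 2$, which is property 3.

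There is no real obstacle here: the lemma is a short algebraic derivation from the submodularity axiom combined with non-negativity. The only mildly non-obvious step is the third application of submodularity to the disjunctive pair $(f \vee x_i, f \vee \neg x_i)$, which is what produces the constant term needed to cancel the unwanted disjunction terms and gives the $+2$ slack on the right-hand side. Once that trick is seen, verification is immediate.
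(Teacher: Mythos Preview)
Your proposal is correct. Properties 1, 2, and 4 are handled exactly as one would expect, and your derivation of property 3 is valid: the Boolean identities $(f\vee x_i)\vee(f\vee\neg x_i)=1$ and $(f\vee x_i)\wedge(f\vee\neg x_i)=f$ are correct, and the three submodularity applications combine to give the stated bound $\mu_s(f\wedge x_i)+\mu_s(f\wedge\neg x_i)\le\mu_s(f)+2-\mu_s(1)\le\mu_s(f)+2$.

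The paper's proof reaches the same intermediate inequality via a slightly different and shorter route: instead of your symmetric pair of applications to $(f,x_i)$ and $(f,\neg x_i)$ followed by a third application to $(f\vee x_i,\,f\vee\neg x_i)$, it applies submodularity just twice, first to $(f,x_i)$ and then to $(f\vee x_i,\,\neg x_i)$, using the identities $(f\vee x_i)\wedge\neg x_i=f\wedge\neg x_i$ and $(f\vee x_i)\vee\neg x_i=1$. Adding those two inequalities directly yields $\mu_s(f\wedge\neg x_i)+\mu_s(1)+\mu_s(f\wedge x_i)\le\mu_s(f)+\mu_s(x_i)+\mu_s(\neg x_i)$. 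Your approach is a bit more symmetric in $x_i$ and $\neg x_i$ at the cost of one extra invocation of submodularity; the paper's is asymmetric but more economical. Both land on the identical final bound, so the difference is purely cosmetic.
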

\begin{proof}
Note that 
$$\mu_s(f \vee x_i) + \mu_s(f \wedge x_i) \leq \mu_s(f) + \mu_s(x_i)$$ 
and 
$$\mu_s((f \vee x_i) \wedge \neg{x_i}) + \mu_s((f \vee x_i) \vee \neg{x_i}) = \mu_s(f \wedge \neg{x_i}) + \mu_s(1) \leq \mu_s(f \vee x_i) + \mu_s(\neg{x_i})$$
Combining these two inequalities we obtain that 
$$\mu_s(f \wedge \neg{x_i}) + \mu_s(1) + \mu_s(f \wedge x_i) \leq \mu_s(f) + \mu_s(x_i) + \mu_s(\neg{x_i})$$
which implies that 
$\mu_s(f \wedge \neg{x_i}) + \mu_s(f \wedge x_i) \leq \mu_s(f) + 2 - \mu_s(1) \leq \mu_s(f) + 2$, as needed.
\end{proof}
\section{Conclusion}\label{conclusion}
In this paper, we showed that for any function $f$, there is a branching program computing a doubly exponential number of copies of $f$ which has linear size per copy of $f$. This result shows that in the branching program model, any operation/function can be amortized with sufficiently many copies. This result also disproves a conjecture about nonuniform catalytic computation, rules out certain approaches for proving general lower space bounds, and gives a constructive analogue of Razborov's result \cite{submodular} on submodular complexity measures.

However, the number of copies required in our construction is extremely large. A remaining open problem is to determine whether having a doubly exponential number of copies is necessary or there a construction with a smaller number of copies. Less ambitiously, if we believe but cannot prove that a doubly exponential number of copies is necessary, can we show that a construction with fewer copies would have surprising implications?

\end{document}